\theoremstyle{remark}
\newtheorem{theorem}{Theorem}
\newtheorem{corollary}{Corollary}
\newtheorem{lemma}{Lemma}
\newtheorem{proposition}{Proposition}
\newtheorem{remark}{Remark}
\theoremstyle{definition}
\def\BibTeX{{\rm B\kern-.05em{\sc i\kern-.025em b}\kern-.08em
T\kern-.1667em\lower.7ex\hbox{E}\kern-.125emX}}
\begin{document}
\title{Estimating Age of Information Using Finite Order Moments}
\author{\IEEEauthorblockN{Licheng Chen\textsuperscript{1}, {Yunquan Dong}\textsuperscript{1,2},~\IEEEmembership{Member,~IEEE}}
\thanks{This work was supported by the National Natural Science Foundation of China (NSFC) under Grant 62071237 and the open research fund of National Mobile Communications Research Laboratory, Southeast University, under grant No. 2020D09.}
	
	\IEEEauthorblockA{\textsuperscript{1}School of Electronic \& Information Engineering, Nanjing University of Information Science \& Technology, Nanjing, China}
	\IEEEauthorblockA{\textsuperscript{2}National Mobile Communications Research Laboratory, Southeast University, Nanjing, China}
	\IEEEauthorblockA{\{charliechen, yunquandong\}@nuist.edu.cn}
}
\maketitle
\begin{abstract}
	Age of information (AoI) has been proposed as a more suitable metric for characterizing the freshness of information than traditional metrics like delay and throughput. However,  the calculation of AoI requires complex analysis and strict end-to-end synchronization. Most existential AoI-related works have assumed that the statistical characterizations of the arrival process and the service process are known.  In fact, due to the randomness of the sources and the channel noises, these processes are often unavailable in reality.
	To this end, we propose a method to estimate the average AoI on a point-to-point wireless Rayleigh channel, which uses the available finite order statistical moments of the arrival process.
	Based on this method, we explicitly present the upper and lower bounds on the average AoI of the system. Our results show that
	1) with the increase of the traffic intensity, the absolute error of the estimated average AoI bounds is first increasing and then decreasing, while the average AoI is monotonically increasing;
	2) the average AoI can be effectively approximated by using the first two order moment estimation bounds, especially when traffic intensity is small or approaches unity;
	3) tighter bounds can be obtained by using more moments.
\end{abstract}
\begin{IEEEkeywords}
	Age of information, block Rayleigh fading channel, queueing analysis, estimation of age, moments.
\end{IEEEkeywords}

\section{Introduction} \label{section1}
\IEEEPARstart{L}{ow-latency} wireless communications becomes more and more important in the Internet of Things nowadays.
For example, sensors provide timely data on the physical condition of patients in smart healthcare \cite{patient}, vehicles share their own status information (eg. acceleration, location, and so on) in real-time smart driving \cite{AoIputup,vechile}, and so on. These applications have strict timeliness requirements since processors need them to make decisions and control the system \cite{decision,AccessScheme}. However, controlling strategies that focus on traditional metrics, such as improving the throughput to ensure full utilization of the system, or limiting the update rate to reduce the packet delay, are not different from timely scheduling \cite{AoIzerowait}. This is because the former leads to the updates backlog in the queuing system, while the latter causes a lack of fresh data at the receiver.

To this end, the Age of information (AoI) was proposed in \cite{AoIputup}, which is defined as the difference between the current time and the generation time of the latest successfully transmitted packet. As an end-to-end metric, AoI comprehensively characterizes the information staleness due to queue congestion and idle time. Moreover, the performances of lossless first-come-first-served (FCFS) systems and lossy last-come-first-served (LCFS) systems in \cite{AoILGFS} are comparable through AoI since the metric is independent of packet loss \cite{lossy},\cite{AoIpacketsdropping}.

Based on queueing theory, the average AoI of M/M/1, M/D/1, and D/M/1 \cite{AoIzerowait} has been investigated under different service disciplines, e.g. the first-generate-first-served (FCFS) policy \cite{AoIzerowait} and the last-generate-first-served (LGFS) policy \cite{AoILGFS}. The AoI distribution has also been derived for the bufferless system \cite{AoIdistribution}. The authors gave a general extensive formula of the stationary distribution of AoI in \cite{General}, which is suitable for point-to-point systems. However, these serving disciplines cannot be generally optimal for various applications without packet scheduling schemes. The reasonability of the zero-wait policy was investigated in \cite{AoIoptimizing}. Specifically, a packet should be served immediately when the channel becomes free to maximize the throughput and minimize the delay. The average AoI of replacing the first packet in the buffer with newly arriving packets was studied in \cite{Aoipacketsreplacing}.

For the above calculation of average AoI, however, we need to know the arrival and service processes, while these variables are difficult to obtain or are obtained with unknown errors \cite{practivesummary}.
 Thus, many researchers tried to use the generated and received statistical timestamps to obtain the average AoI. In \cite{AoIpractice2018}, the variation of AoI has been investigated for a realistic communication over TCP/IP links served by WiFi, LTE, 3G, 2G, and Ethernet \cite{qwang,swan}. Moreover, the first reported investigation of AoI on real IoT testbeds was present in \cite{AoIpractice2020}. In \cite{AoIpractice2019}, the authors showed the average \text{AoI}-related values of the data streaming are affected by clock synchronization errors between the transmitter and receiver. They also provided an instantaneous \text{AoI} measurement method without clock synchronization. Specifically, the source node sends the packet over the Internet through UDP connections to a receiver, which then echoes back \cite{xchen,zyao}. However, the method requires extra transmissions and contains more uncertainties.

	In this paper, therefore, we propose a method to estimate the average AoI by using a finite number of moments of the arrival and service processes, which provides a pair of tight upper and lower bounds. We obtain the moments of inter-arrival time and service time at two ends without synchronization. Moreover, the estimation of average AoI with this method requires no complex queueing analysis but some simple iterations. Even when only the first and second moments are employed, the estimated bounds provided by our method perform well. Running more iterations using the higher-order moments can further improve the accuracy of the average AoI estimation even more.

The contributions of this paper are given as follows:
\begin{itemize}
\item We provide a discrete transmission model for communications over a rayleigh channel. We estimate its average AoI by using the moments of the arrival process.
\item We provide an estimation method of the average AoI by using finite order moments of the arrival process. The method provides a pair of tight upper and lower bounds for each traffic intensity $\rho$.
\item We show that the performance of our method depends on the magnitude of each order moment and traffic intensity. We also present the distributions of the inter-arrival time achieving the largest and the smallest average AoI for the system.
\end{itemize}
We organize the rest of the paper as follows. In Section \ref{section 2}, we present the system model and the transmission models. Section \ref{section 4} discusses the transmission model and presents its average AoI and corresponding estimation method. In Section \ref{section 5}, we test the performance of our method in estimating the average AoI by simulations. Finally, we conclude our work in Section \ref{section 6}.	
\section{SYSTEM MODEL} \label{section 2}
\subsection{Channel and Transmission Model}\label{Channel and Transmission Model}
\begin{figure}[!t]
\centering
\includegraphics[width=3.45in]{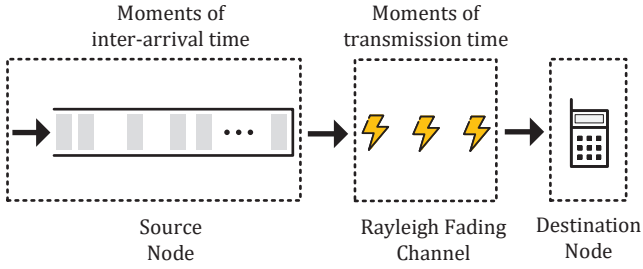}
\caption{Point-to-point communication system}
\label{fig1}
\end{figure}
We consider a wireless communication system consisting of a source node and a destination node, where the source delivers its packets over a block fading channel with additive white Gaussian noise (AWGN), as shown in Fig. \ref{fig1}. Time is discrete and the length of a block is $T_{\text{B}}$. The source node samples the surrounding environment with random intervals. Each sample is encoded into a packet of $L$ bits and stored in an infinite-long buffer. The packets are then transmitted over the fading channel by using the FCFS discipline. We consider the packet transmission over a fading Rayleigh channel with power gain distribution $f_{\gamma}(x)$, which is assumed to be known.
Based on a sequence of packet transmissions, we assume that the moments of the transmission time of the packets can be estimated up to some finitely large order.

In each block, the transmission of a packet will be successful if the instantaneous signal-to-noise ratio (SNR) is larger than a certain threshold $V_T$. Otherwise, the packet shall be retransmitted in the next block (c.f. Section \ref{section 4}). It can be readily shown that the transmission time of a packet follows the geometric distribution. We assume that the distribution of the inter-arrival time is unknown, and estimate the average AoI with its moments.
\subsection{Queueing system}
We model the transmission process over the channel as a single-server FCFS queue.
As shown in Fig. \ref{fig2}, we denote the arrival (generation) epoch and the departure epoch of the $k^{\text{th}}$ packet, respectively, as $n_k$ and $n_k'$.
We denote the period between two neighboring arrival epochs as inter-arrival time $X_k=n_k-n_{k-1}$ and denote the period between two neighboring departures as inter-departure time $Y_k=n'_k-n'_{k-1}$.

The number of blocks required to complete a packet transmission is referred to as the service time. Note that the service time $S_k$ of the packets is determined by the channel and is independent of inter-arrival times.
We also denote the system time as ${T}_{k}=n_{k}'-n_{k}$. If the $k^{\text{th}}$ packet arrives at the buffer when the source is busy transmitting the $(k-1)^{\text{th}}$ packet, it must wait in the buffer until the transmission of the $(k-1)^{\text{th}}$ packet is completed. Thus, the waiting time of the $k^{\text{th}}$ packet is $W_k=\max\{n_{k-1}'-{n_k},0\}=\max\{T_{k-1}-X_k,0\}$ and we have $T_k=W_k+S_k$. On the contrary, the buffer remains empty for a certain duration if the arrival epoch of the $k^{\text{th}}$ packet is later than the departure epoch of the $k-1^{\text{th}}$ packet. Thus, the length of this period can be expressed as ${I_k}=\max\{n_{k}-n_{k-1}^{'},0\}=\max\{{X}_{k}-{T}_{k-1},0\}$ and we have $Y_k=I_k+S_k$.

We refer to the ratio of the expectation between service time and the inter-arrival time as traffic intensity, i.e., $\rho={E(S)}/{E(X)}$. The system is stable if $\rho<1$.
\subsection{Age of Information}
\begin{figure}[htbp!]
\centering
\includegraphics[width=3.5in]{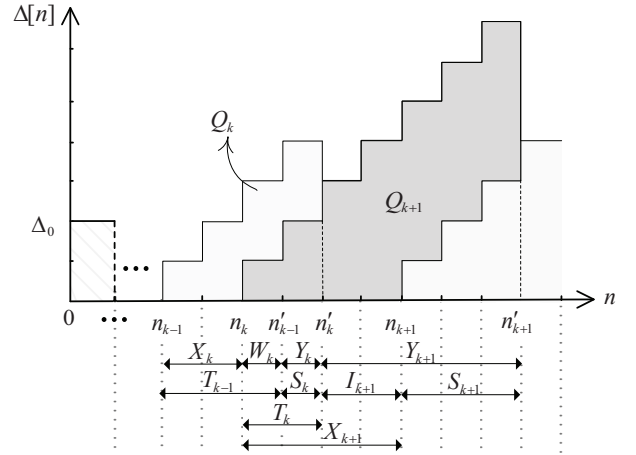}
\caption{Sample path of discrete average \text{AoI}  $\Delta(n)$}
\label{fig2}
\end{figure}
In block $n$, we denote the arrival epoch of the latest successfully received packet as $ U(n) $ and the \text{AoI} $\Delta(n)$ is defined as
\begin{equation*}
\Delta(n)=n-U(n).
\end{equation*}
	From Fig. \ref{fig2}, it can be seen that $\Delta(n)$ first increases (up to  $\Delta({n'_{k}}^-)=X_k+T_k$) and then drops to $\Delta(n'_{k})=T_{k}$ upon receiving a new packet at the destination.
During a period of $N$ blocks, we assume that there are $K$ successfully received packets and denote the average AoI of this period as ${\bar\Delta}=\lim_{N\to \infty} \sum\nolimits_{n=0}^{N}{\Delta(n)}/N$.
By dividing the area of Fig. \ref{fig2} into a sequence of disjoint polygon areas which combine to form the area under the AoI curve, we have
\begin{equation}\label{AoI}
\bar\Delta=\frac{E(XT)+E(X^2)/2}{E(X)},
\end{equation}
	in which $Q_k=\tfrac{1}{2}(T_k+X_k)^2-\tfrac{1}{2}T_k^2$ [\ref{survey}, (3)].
\begin{remark}
To calculate the term $E(XT)$ in the expression of the average AoI shown in \eqref{AoI}, we need the probability density/mass distributions of the inter-arrival time $X_k$.
In practical implementations, it is difficult to obtain the distribution of inter-arrival time. On the other hand, it is much easier to obtain its moments.
In this paper, therefore, we shall use the moments of inter-arrival time $X_k$ to estimate the average AoI of the system.
In particular, it is assumed that the needed moments have been estimated and are available for our analysis.
\end{remark}
\section{Bounding Average AoI of the Transmission Model}\label{section 4}
Under this model (cf. Section \ref{Channel and Transmission Model}), the destination node can decode the packet from the received signal if the instantaneous SNR $\text{snr}_\text{n}$ is greater than a certain threshold $V_{\text{T}}$.
In case the packet cannot be decoded, and the source node needs to retransmit the packet again in the next block.
We represent the probability of a successful packet transmission as $\mu$, i.e.,
\begin{equation*}
\mu =\Pr\{\text{snr}_\text{n} >V_{\text{T}}\}.
\end{equation*}
Note that $\mu$ can be estimated based on the results in channel estimations.
We denote the number of blocks required for the successful transmission of the $k^{\text{th}}$ packet as service time $S_k$.
	It is clear that $S_k$ is a geometric random variable with the parameter $\mu$ regardless of the fading model.
Specifically, the distribution of $S_k$ can be expressed by
\begin{align}
\Pr\{S_k=n\} &={{\bar{\mu }}^{n-1}}\mu, &n\ge 1,\label{pdf_s2}
\end{align}
in which $\bar \mu=1-\mu$.

We make the following assumptions on the packet process.
\begin{enumerate}
\item[\textit{D1.}]  \textit{Distribution unavailable:} The distribution of the inter-arrival time $X_k$ is unknown \cite{channeuncertain}.
\item[\textit{D2.}]  \textit{Moments available:} The first order moment $E(X_k)$ and the second order moment $E(X_k^2)$ of the inter-arrival time exist and are known.
\item[\textit{D3.}] \textit{Early arrival system:} The packets arrive at the end of epochs $(n^-,n)$. The transmission of each packet starts and completes at the beginning of blocks $(n',{n'}^+)$ [\ref{alpha3} pp. 193].
\end{enumerate}

\subsection{Characterizing Average AoI Using Moments and PGF}
We denote the number of delivered packets during the inter-arrival time $X_k$ as $B_k$, and denote the queue length at the arrival of the $k^{\text{th}}$ packet as $L_k^-$. It is clear that $L^-_{k}$ can be expressed as
\begin{equation*}
L^-_{k+1}=\left\{
\begin{aligned}
&L^-_{k}+1-B_{k+1}   & L^-_{k}+1-B_{k+1}>0, \\
&0      & L^-_{k}+1-B_{k+1}\le0.
\end{aligned}
\right.
\end{equation*}
The process $\{{L^-_k,k\ge0}\}$ is an embedded Markov chain.
The stationary distribution of queueing length $L^-$ is explicitly present in the following lemma.
\begin{lemma}
	Given $\rho<1$ and the system is stable, the stationary distribution of queue length $L^{-}=\lim_{k\to\infty} L^{-}_k$  exists and can be expressed by
	\begin{align*}
\Pr\{L^{-}=n\}&=(1-\alpha)\alpha^n\, &n\ge0,
	\end{align*}
where $\alpha$ is the unique real root of equation $z=G_X(\bar\mu+{\mu}z)$ for $z\in(0,1)$. Furthermore, the distribution of the system time $T_k$ of a packet is
\begin{align}\label{PGF_T2}
\Pr\{T=n\}&=(1-\lambda ){{\lambda }^{n-1}},&n\ge 1.
\end{align}
\end{lemma}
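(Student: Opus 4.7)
The plan is to treat this as the discrete-time analog of the classical $G/M/1$ analysis: first derive the stationary law of $L^-$ via a geometric ansatz at the arrival epochs, then transfer to the system time $T$ through the memoryless property of the geometric service.

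First, I would write down the one-step transition kernel of the embedded chain $\{L_k^-\}$. Because each block independently completes the packet currently in service with probability $\mu$ (by memorylessness of geometric service), the number of potential completions during an inter-arrival interval of length $x$ is $\mathrm{Bin}(x,\mu)$. This gives $p_{ij}=\Pr\{L_{k+1}^-=j\mid L_k^-=i\}=\sum_x \binom{x}{i+1-j}\mu^{i+1-j}\bar\mu^{x-(i+1-j)}\Pr\{X=x\}$ for $1\le j\le i+1$, with the boundary entry $p_{i0}$ absorbing the residual mass $\Pr\{\mathrm{Bin}(X,\mu)\ge i+1\}$.

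Next, I would substitute the geometric ansatz $\pi_n=(1-\alpha)\alpha^n$ into the balance equation $\pi_j=\sum_{i\ge j-1}\pi_i p_{ij}$ for $j\ge 1$. Reindexing via $k=i+1-j$ and applying the binomial theorem collapses the inner sum to $(\bar\mu+\alpha\mu)^x$, so the right-hand side becomes $(1-\alpha)\alpha^{j-1}G_X(\bar\mu+\mu\alpha)$. Matching with $\pi_j=(1-\alpha)\alpha^j$ forces the characteristic equation $\alpha=G_X(\bar\mu+\mu\alpha)$. Since the resulting $\{\pi_n\}$ sums to unity and the chain is irreducible and positive recurrent for $\rho<1$, this is the unique stationary distribution by standard Markov chain theory. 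Existence and uniqueness of $\alpha\in(0,1)$ then follow from a convexity argument: the function $h(z)=G_X(\bar\mu+\mu z)-z$ satisfies $h(0)=G_X(\bar\mu)>0$ and $h(1)=0$, with $h'(1)=\mu E(X)-1=1/\rho-1>0$ under stability, so strict convexity of $G_X(\bar\mu+\mu z)$ forces exactly one crossing of zero in $(0,1)$.

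For the system-time distribution, I would condition on $L^-=n$. By memorylessness of geometric service, the residual service of the packet currently being transmitted is again $\mathrm{Geo}(\mu)$, so the tagged packet's system time is the sum of $L^-+1$ i.i.d.\ $\mathrm{Geo}(\mu)$ variables. The PGF of $T$ therefore factors as $G_T(z)=G_S(z)\,G_{L^-}(G_S(z))$ with $G_S(z)=\mu z/(1-\bar\mu z)$ and $G_{L^-}(z)=(1-\alpha)/(1-\alpha z)$. Direct algebraic simplification, using $(1-\alpha)\mu = 1-(\bar\mu+\mu\alpha)$, collapses this to $G_T(z)=(1-\lambda)z/[1-\lambda z]$ with $\lambda=\bar\mu+\mu\alpha$, which is precisely the PGF of a geometric law on $\{1,2,\dots\}$ with parameter $1-\lambda$, yielding \eqref{PGF_T2}.

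The main obstacle is justifying that the geometric ansatz really closes. One must carefully interchange the order of summation when substituting the binomial-weighted kernel into the balance equation, and separately verify that the boundary relation at $j=0$ is automatically consistent by total probability rather than imposing an extra constraint on $\alpha$. Once those bookkeeping details are handled, the convexity argument for $\alpha$ and the PGF manipulation leading to the geometric form of $T$ are routine.
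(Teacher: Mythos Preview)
Your proposal is correct and follows the standard $GI/Geo/1$ analysis (the discrete-time analog of the $GI/M/1$ argument). The paper itself does not give a proof at all: it simply cites [\ref{queueing1}, Section 4.6.2], and what you have written is precisely the argument one finds there---geometric ansatz for the embedded chain, the characteristic equation $\alpha=G_X(\bar\mu+\mu\alpha)$ via the balance equations, uniqueness of the root by convexity, and then the PGF composition $G_T(z)=G_S(z)\,G_{L^-}(G_S(z))$ collapsing to $(1-\lambda)z/(1-\lambda z)$ through the identity $(1-\alpha)\mu=1-\lambda$. One small care point worth noting explicitly when you write it up: the binomial form for $B_{k+1}$ during an inter-arrival interval is valid only on the event that the server is continuously busy, which is exactly the case $j\ge 1$ you isolate; the $j=0$ boundary then indeed falls out by normalization, as you say.
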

	\begin{proof}
A detailed proof can be found in [\ref{queueing1}, Section 4.6.2].
	\end{proof}

Based on \eqref{PGF_T2}, the average AoI of the system can be expressed as shown in the following theorem.
\begin{theorem}\label{gi/geom/1 AoI}
The average AoI of the system can be expressed as
\begin{equation}\label{AoI2}
\bar\Delta_{\text{d}}=E(S)+\frac{E(X^2)}{2E(X)}+\frac{\lambda G_X'(\lambda) }{E(X)(1-\lambda)}.
\end{equation}
\end{theorem}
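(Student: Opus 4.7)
The plan is to start from the general expression \eqref{AoI} for the average AoI and compute the cross-moment $E(XT)$ explicitly, using the queueing recursion together with the stationary geometric distribution of the system time given in \eqref{PGF_T2}. Since $E(X)$ and $E(X^2)$ already appear directly in \eqref{AoI}, all the real work reduces to evaluating $E(X_k T_k)$.

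First, I would decompose the system time as $T_k = W_k + S_k$ with $W_k = \max\{T_{k-1}-X_k,0\}$, so that $E(X_k T_k) = E(X_k W_k) + E(X_k S_k)$. Because $S_k$ is geometric with parameter $\mu$ and is determined by the channel independently of the arrival process, the second term is simply $E(X)E(S)$, which after dividing by $E(X)$ in \eqref{AoI} will produce the $E(S)$ summand in \eqref{AoI2}. For the first term, the GI structure of the arrivals guarantees that $X_k$ is independent of all past arrivals and services, and hence of $T_{k-1}$. Conditioning on $X_k=x$ and invoking the memoryless property of the geometric law in \eqref{PGF_T2}, one has $E[\max(T_{k-1}-x,0)] = \lambda^{x}/(1-\lambda)$. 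Taking the outer expectation over $X_k$ yields $E(X_k W_k) = E[X\lambda^{X}]/(1-\lambda)$, and since $G_X'(z) = E[X z^{X-1}]$ this is exactly $\lambda G_X'(\lambda)/(1-\lambda)$.

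Substituting $E(X_k T_k) = \lambda G_X'(\lambda)/(1-\lambda) + E(X)E(S)$ back into \eqref{AoI} and dividing by $E(X)$ immediately assembles the three claimed terms of \eqref{AoI2}. The main obstacle is really contained in the single cross-moment step: one must both justify, using the renewal structure of the arrivals and the independence of the service sequence, that $T_{k-1}$ is independent of $X_k$ in the stationary regime, and then recognize that the sum $\sum_{x} x \lambda^{x}\Pr(X=x)$ arising from the memoryless reduction is exactly $\lambda G_X'(\lambda)$. Once this PGF identification is in place, everything else is elementary algebra.
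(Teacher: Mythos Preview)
Your derivation is correct and is exactly the computation the paper sets up but does not actually carry out: the paper states Theorem~\ref{gi/geom/1 AoI} without proof, merely pointing to \eqref{AoI} and to the stationary geometric law \eqref{PGF_T2} from Lemma~1. Your decomposition $E(X_kT_k)=E(X_kW_k)+E(X)E(S)$, the memoryless evaluation $E[(T_{k-1}-x)^+]=\lambda^{x}/(1-\lambda)$, and the identification $E[X\lambda^{X}]=\lambda G_X'(\lambda)$ fill in precisely the missing steps and yield \eqref{AoI2} directly.
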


In \eqref{AoI2}, $E(S)$ can be calculated or estimated.
From assumption D1, the first two order moments $E(X)$ and $E(X^2)$ are also available through some estimations.
The remaining problem is whether we can approximate $G_X'(\lambda)$ by using the first two order moments $E(X)$ and $E(X^2)$.
Moreover, we also need to solve $\alpha$ and $\lambda$ according to $G_X(\lambda)=\alpha$ and $\lambda=\bar \mu+\mu\alpha$. Assuming that $\rho<1$ and the queue being stable, $\lambda$ can be obtained by
$\lambda=\frac{\bar{u}}{1-\Pr\{Y=1\}}$,
where $\Pr\{Y=1\}$ is the probability of the departure interval taking on the value 1.

\subsection{Estimating Average AoI Using First Two Order Moments}
The PGF of a discrete random variable $X$ exists only if $X$ has finite moments $E(S^n)$ for all $n=1, 2,\cdots$ (cf. [\ref{moment2}, pp. 1]).
The distribution of $X$ is determined exclusively by the moment sequence [\ref{uniquelydetermine} Theorem 4.17.1]. If some of the moments of inter-arrival time $S$ are not finite. We can truncate the support of variable $X$ as in [\ref{truncated}, Section 3.3]. We assume that the inter-arrival time has limited moments without losing generality.

We present the PGF of the inter-arrival time with its moments, as shown in the following theorem.
\begin{theorem} \label{Th5}
The first order PGF $G'_X(z)$ of the inter-arrival time can be expressed by its moments as
\begin{align}\label{G'_X(z)}
&G'_X(z)=\frac{1}{z}\sum_{n=1}^\infty {\frac{\ln (z)^{n-1}E(X^n)}{(n-1)!}}, &z\in(0,1].
\end{align}
\end{theorem}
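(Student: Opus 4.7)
The plan is to start from the definition of the probability generating function, $G_X(z)=E(z^X)=\sum_{k\ge0}\Pr\{X=k\}z^k$, rewrite $z^X$ as the exponential $e^{X\ln z}$, and then use its Taylor expansion to re-express $G_X(z)$ as a power series in $\ln z$ whose coefficients are precisely the moments $E(X^n)$. Differentiating term by term in $z$ will then produce the claimed identity for $G'_X(z)$.

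More concretely, first I would write
\begin{equation*}
G_X(z)=E\bigl(e^{X\ln z}\bigr)=E\!\left(\sum_{n=0}^\infty\frac{(X\ln z)^n}{n!}\right),
\end{equation*}
and then exchange the expectation with the infinite sum to obtain
\begin{equation*}
G_X(z)=\sum_{n=0}^\infty\frac{(\ln z)^n\,E(X^n)}{n!},\qquad z\in(0,1].
\end{equation*}
Differentiating both sides with respect to $z$, and using $\tfrac{d}{dz}(\ln z)^n=\tfrac{n}{z}(\ln z)^{n-1}$, yields
\begin{equation*}
G'_X(z)=\frac{1}{z}\sum_{n=1}^\infty\frac{(\ln z)^{n-1}\,E(X^n)}{(n-1)!},
\end{equation*}
which is exactly \eqref{G'_X(z)}. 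As a sanity check, setting $z=1$ kills every term except $n=1$, recovering the standard identity $G'_X(1)=E(X)$.

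The main obstacle is rigorously justifying the two interchanges of limits: pulling the expectation inside the Taylor series, and differentiating the series term by term. For $z\in(0,1]$ we have $\ln z\le 0$ and $|\ln z|$ bounded on any compact sub-interval, and by the standing assumption that all moments $E(X^n)$ are finite (and, as noted just before the theorem, uniquely determine the distribution of $X$, ensuring the moment series actually represents $G_X$), one can invoke Fubini--Tonelli on the non-negative terms $\Pr\{X=k\}\,(k|\ln z|)^n/n!$ to legitimize the swap with $E(\cdot)$, and then use uniform convergence on compact subsets of $(0,1]$ to justify term-by-term differentiation. I expect the algebra to be routine once these convergence conditions are explicitly verified under assumption D2 (extended to all orders via the moment-determined-distribution hypothesis stated just above the theorem).
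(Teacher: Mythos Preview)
Your argument is correct and rests on the same core idea as the paper's: writing $z^X=e^{X\ln z}$, Taylor-expanding the exponential, and swapping the order of summation so that the moments $E(X^n)$ appear as coefficients. The only difference is in the order of operations. You expand $G_X(z)$ first and then differentiate term by term, which requires you to justify two interchanges (Fubini for $E(\cdot)$ and the sum, plus term-by-term differentiation in $z$). The paper instead starts directly from $G'_X(z)=\tfrac{1}{z}\sum_{m\ge1} m\Pr\{X=m\}z^m$, expands $z^m$ in its Taylor series, and swaps the two sums; this avoids the separate differentiation step and leaves only one interchange to defend. Both routes are short and yield the same identity; the paper's is marginally more direct, while yours has the pleasant byproduct of also giving the moment expansion of $G_X(z)$ itself along the way.
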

\begin{proof}
By the definition of PGF, we have
\begin{align*}
G'_X(z)=\frac{1}{z}\sum_{m\ge 1}{m\Pr\{X=m\}z^{m}}.
\end{align*}
By expressing  $z^m=\exp(m\ln (z))$ in its Taylor series, we then have
\begin{align*}
G'_X(z)&=\frac{1}{z}{\sum_{m\ge 1}{m\Pr\{X=m\}}\sum_{n\ge 0}{\frac{\ln (z)^{n}m^{n}}{n!}}},\\
&=\frac{1}{z}\sum_{n\ge0}{\frac{\ln (z)^n}{n!}}\sum_{m\ge1}{\Pr\{X=m\}m^{n+1}},\\
&=\frac{1}{z}\sum_{n\ge 1} {\frac{\ln (z)^{n-1}E(X^n)}{(n-1)!}}.
\end{align*}
Therefore, Theorem \ref{Th5} is proved.
\end{proof}
We denote the partial sum of the PGF given in \eqref{G'_X(z)} as
\begin{equation}\label{G_X'(z)_p}
\widehat G_X^{'K}(z)=\frac{1}{z}\sum_{n=1}^K {\frac{\ln (z)^{n-1}E(X^n)}{(n-1)!}},
\end{equation}
in which $K=1,2,\cdots$ and $z\in(0,1]$. The item $G'_X(\lambda)$ can be approximated by the partial sum in \eqref{G_X'(z)_p} with an acceptable error.
%%%%%%%%%%%
\begin{lemma}
For the random variable $X$ whose PGF exist and probability $\Pr\{X=0\}=0$, the following bounds hold
\begin{equation} \label{initial_bound2}
\begin{aligned}
	&\frac{G_X(z)}{z} \leq G'_X(z) \leq \frac{1-G_X(z)}{1-z},&z\in(0,1].	
\end{aligned}
\end{equation}
\end{lemma}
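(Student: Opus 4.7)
The plan is to establish both inequalities by a term-by-term comparison of the three expressions, after expanding them as power series in $z$ using the hypothesis $\Pr\{X=0\}=0$. Since $X$ is supported on $\{1,2,\ldots\}$, we have the three series
\begin{equation*}
\frac{G_X(z)}{z}=\sum_{m\ge1}\Pr\{X=m\}\,z^{m-1},\qquad
G'_X(z)=\sum_{m\ge1}m\Pr\{X=m\}\,z^{m-1},
\end{equation*}
and, using $1-z^m=(1-z)(1+z+\cdots+z^{m-1})$ together with $G_X(1)=1$,
\begin{equation*}
\frac{1-G_X(z)}{1-z}=\sum_{m\ge1}\Pr\{X=m\}\bigl(1+z+z^{2}+\cdots+z^{m-1}\bigr).
\end{equation*}
All three series converge for $z\in(0,1]$, so I can reason coefficient by coefficient after the identities are justified.

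For the lower bound, I would compare the series for $G_X(z)/z$ and $G'_X(z)$ term by term: since $m\ge 1$ for every index in the sum and $z^{m-1}\ge 0$, one has $\Pr\{X=m\}z^{m-1}\le m\Pr\{X=m\}z^{m-1}$, and summing over $m$ yields the left inequality immediately.

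For the upper bound, the essential observation is that, for each fixed $m\ge 1$ and each $z\in(0,1]$, the $m$ terms of the polynomial $1+z+z^{2}+\cdots+z^{m-1}$ are each at least $z^{m-1}$, so
\begin{equation*}
1+z+\cdots+z^{m-1}\;\ge\;m\,z^{m-1}.
\end{equation*}
Multiplying by $\Pr\{X=m\}\ge 0$ and summing over $m$ gives $G'_X(z)\le(1-G_X(z))/(1-z)$. The only case that needs a separate comment is $z=1$, where $(1-G_X(z))/(1-z)$ must be interpreted as the limit; by L'Hôpital, the limit equals $G'_X(1)=E(X)$, so the upper inequality becomes an equality and the lower inequality reduces to $1\le E(X)$, which holds because $X\ge 1$ almost surely.

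I do not anticipate a genuine obstacle: the whole argument rests on the elementary pointwise inequality $m\,z^{m-1}\le 1+z+\cdots+z^{m-1}$ on $(0,1]$ and on the factorization $1-z^m=(1-z)\sum_{j=0}^{m-1}z^{j}$. The only subtlety is being careful at $z=1$, where both sides of the upper bound are indeterminate in the form $0/0$ and must be treated as a limit.
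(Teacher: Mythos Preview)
Your proof is correct and complete; the term-by-term comparison works exactly as you describe, and the treatment of the endpoint $z=1$ is careful and appropriate.

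Your route, however, differs from the paper's. The paper argues geometrically from the convexity of $G_X$ on $[0,1]$: since $G_X(0)=\Pr\{X=0\}=0$ and $G_X(1)=1$, the slope of the secant through $(0,0)$ and $(z,G_X(z))$ is $G_X(z)/z$, and the slope of the secant through $(z,G_X(z))$ and $(1,1)$ is $(1-G_X(z))/(1-z)$; convexity then sandwiches the tangent slope $G'_X(z)$ between these two secant slopes. Your argument instead expands all three expressions as series and compares them monomial by monomial via the elementary inequality $mz^{m-1}\le 1+z+\cdots+z^{m-1}$ on $(0,1]$. In effect you are proving the secant--tangent inequality for each monomial $z^m$ separately and then summing, whereas the paper invokes it once for the aggregate convex function $G_X$. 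The paper's approach is shorter and more conceptual; yours is fully self-contained, avoids any appeal to a named inequality, and makes the role of the hypothesis $\Pr\{X=0\}=0$ completely explicit (it is what allows the series for $G_X(z)/z$ to start at $m=1$).
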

\begin{proof}
In Fig. \ref{bounds of G}, we take the case $z=\lambda$ for example. Since $G'_X(z)$ is a monotonically increasing concave function, we can apply Jensen's inequality in \cite{Jenseninequality} to derive the constraints
$l'_{L_0}(\lambda)\le G'_X(\lambda) \le l'_{U_0}(\lambda)$,
where $l'_{L_0}(\lambda)={G_X(\lambda)}/{\lambda}=\alpha/\lambda$ and $l'_{U_0}(\lambda)=\frac{1-G_X(\lambda)}{1-\lambda}=E(S)$.

\begin{figure}[htbp!]
\centering
\includegraphics[width=3.5in]{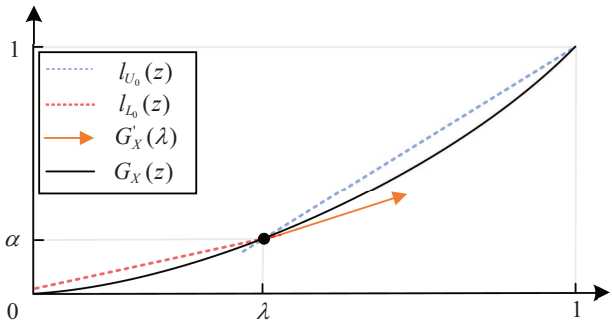}	
\caption{The initial bounds of $G'_X(\lambda)$.}
\label{bounds of G}
\end{figure}
\end{proof}	
Thus we can use $\widehat G_{X}^{'1}(\lambda)$ and $\widehat G_{X}^{'2}(\lambda)$ to estimate the average AoI of the system instead of using $G_{X}'(\lambda)$.
\begin{theorem}
Based on the estimations of the first two order moments of $X_k$, the bounds of the average AoI of the system can be expressed as
\begin{subequations}\label{rt:aoi_d_bound}
\begin{align}f
	\bar\Delta_{\text{d}}&<E(S)+\frac{E(X^2)}{2E(X)}+\frac{\lambda \min\{E[S],G_{X}^{'1}(\lambda)\} }{E(X)(1-\lambda)},\\
	\bar\Delta_{\text{d}}&>E(S)+\frac{E(X^2)}{2E(X)}+\frac{\lambda \max\{\alpha/\lambda,G_{X}^{'2}(\lambda)\} }{E(X)(1-\lambda)}.
\end{align}
\end{subequations}
\end{theorem}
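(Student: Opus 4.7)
The plan is to reduce the theorem to bounding the single unknown quantity $G_X'(\lambda)$ in the AoI formula of Theorem \ref{gi/geom/1 AoI}, and then to sandwich it from two independent directions: once via the Lemma bounds and once via the Taylor-type partial sums from Theorem \ref{Th5}. First I would observe that every term appearing in
\begin{equation*}
\bar\Delta_{\text{d}} = E(S)+\frac{E(X^{2})}{2E(X)}+\frac{\lambda G_X'(\lambda)}{E(X)(1-\lambda)}
\end{equation*}
other than $G_X'(\lambda)$ is a computable function of the assumed data: $E(S)=1/\mu$ is fixed by the channel, $E(X)$ and $E(X^{2})$ are available by assumption D2, and both $\alpha$ and $\lambda$ are recoverable from the relations $G_X(\lambda)=\alpha$ and $\lambda=\bar\mu+\mu\alpha$ (equivalently, through the departure-interval identity $\lambda=\bar\mu/(1-\Pr\{Y=1\})$). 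Therefore the entire task reduces to producing upper and lower bounds on $G_X'(\lambda)$ that are expressible from $E(X)$, $E(X^{2})$, $\alpha$, $\lambda$.

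Next I would invoke the preceding Lemma at $z=\lambda$. Using $G_X(\lambda)=\alpha$ and the arithmetic identity $(1-\alpha)/(1-\lambda)=(1-\alpha)/(\mu(1-\alpha))=1/\mu=E(S)$, the Lemma gives the first pair of bounds $\alpha/\lambda\le G_X'(\lambda)\le E(S)$. To obtain the second pair I would truncate the series in Theorem \ref{Th5}. The order-one truncation $\widehat G_X^{'1}(\lambda)=E(X)/\lambda$ is an upper bound by a direct two-line argument: since $\lambda\in(0,1)$,
\begin{equation*}
G_X'(\lambda)=\sum_{m\ge 1}m\Pr\{X=m\}\lambda^{m-1}\le \sum_{m\ge 1}m\Pr\{X=m\}=E(X)\le \frac{E(X)}{\lambda}=\widehat G_X^{'1}(\lambda).
\end{equation*}
The order-two truncation $\widehat G_X^{'2}(\lambda)$ must be a lower bound, which I would justify by noting that $\ln\lambda<0$ turns the expansion in Theorem \ref{Th5} into a sign-alternating series $\sum_{n\ge 1}(-w)^{n-1}E(X^n)/(n-1)!$ with $w=-\ln\lambda>0$, and then invoking the alternating-series sandwich: the true value lies between consecutive partial sums, so truncating after an even number of terms under-estimates while truncating after an odd number over-estimates.

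Once both pairs of inequalities are in hand, the rest is assembly. Since $E(S)$ and $\widehat G_X^{'1}(\lambda)$ each upper-bound $G_X'(\lambda)$, so does their minimum; likewise $\max\{\alpha/\lambda,\widehat G_X^{'2}(\lambda)\}$ is a valid lower bound. Substituting these into the closed form for $\bar\Delta_{\text{d}}$ and observing that the coefficient $\lambda/(E(X)(1-\lambda))$ of $G_X'(\lambda)$ is positive — so the inequality directions are preserved — yields exactly the two displayed bounds of \eqref{rt:aoi_d_bound}.

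The step I expect to be the main obstacle is the rigorous justification that $\widehat G_X^{'2}(\lambda)\le G_X'(\lambda)$. The clean alternating-series argument really needs the magnitudes $w^{n-1}E(X^n)/(n-1)!$ to be eventually monotone decreasing, which translates into the condition $|\ln\lambda|E(X^{n+1})\le n\, E(X^n)$ for large $n$ and is a mild tail-regularity requirement on $X$; if one wishes to avoid this assumption, an alternative is to bound the order-three remainder of Theorem \ref{Th5} directly via Taylor's integral remainder formula applied to $z\mapsto zG_X'(z)$ and exploit the concavity of $G_X'$ already used in the Lemma. Either route will settle the remaining inequality and complete the argument.
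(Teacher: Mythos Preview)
Your proposal is correct and follows the same route as the paper, whose proof is literally the one-line ``substitute $\widehat G_X^{'1}(\lambda)$ and $\widehat G_X^{'2}(\lambda)$ for $G_X'(\lambda)$ in \eqref{AoI2},'' implicitly leaning on the Lemma bounds $\alpha/\lambda\le G_X'(\lambda)\le E(S)$ and on Proposition~\ref{G'X_odd_even} for the partial-sum inequalities. Your concern about the alternating-series step is well placed; the cleanest way to close it (and the most natural reading of the paper's deferred Proposition) is to apply the Lagrange remainder of $e^{x}$ at $x=m\ln\lambda<0$ term-by-term in $\sum_{m\ge1} m\,p_m\,e^{m\ln\lambda}$, which fixes the sign of the truncation error for every $K$ without any monotonicity hypothesis on the moments.
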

\begin{proof}
The theorem is proved by substituting the item $G_{X}'(\lambda)$ in \eqref{AoI2} by $G_{X}^{'1}(\lambda)$ and $G_{X}^{'2}(\lambda)$.
\end{proof}
We also demonstrate how the partial sum limits the PGF $G'_{X}(z)$ in the following proposition.
\begin{proposition}\label{G'X_odd_even}
Each partial sum estimation $\widehat G_X^{'K}(z)$ (cf. \eqref{G_X'(z)_p}) up to an even number of orders is smaller than $G'_X(z)$, and each partial sum estimation up to an odd number of orders is greater than $G'_X(z)$.
That is,
\begin{subequations}\label{G'X_odd_even_compare}
\begin{align}
	\widehat{G}{^{'2i+1}_{X}}(z)& > G'_{X}(z) \\
	\widehat {G}{^{'2i}_{X}} (z)&< G'_{X}(z),
\end{align}
\end{subequations}
for any $z\in (0,1]$ and $i=0,1,2,\cdots$.
\end{proposition}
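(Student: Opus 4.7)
The plan is to recognize the series defining $G'_X(z)$ as the Taylor expansion, in the variable $u=\ln z$, of a smooth strictly positive function, and then to invoke Taylor's theorem with the Lagrange form of the remainder so that the sign of the truncation error is dictated entirely by the parity of $K$.

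First I would rewrite $z\,G'_X(z)=E(Xz^X)=E(Xe^{uX})$ with $u=\ln z\in(-\infty,0]$ for $z\in(0,1]$, and set $h(u)=E(Xe^{uX})$. Differentiating under the expectation, which is justified by dominated convergence because for $u\le 0$ the integrand $X^{m+1}e^{uX}$ is dominated by $X^{m+1}$ whose expectation is finite by assumption, gives $h^{(m)}(u)=E(X^{m+1}e^{uX})$, and in particular $h^{(m)}(0)=E(X^{m+1})$. Reindexing $m=n-1$ in the definition \eqref{G_X'(z)_p} then shows that the truncation satisfies $z\,\widehat G_X^{'K}(z)=\sum_{m=0}^{K-1}h^{(m)}(0)u^m/m!$, i.e.\ it is exactly the degree-$(K-1)$ Taylor polynomial of $h$ at $u=0$.

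Next I would apply Taylor's theorem with Lagrange remainder to write
\[
z\,G'_X(z)-z\,\widehat G_X^{'K}(z)=h(u)-\sum_{m=0}^{K-1}\frac{h^{(m)}(0)}{m!}u^m=\frac{h^{(K)}(\xi)}{K!}u^K
\]
for some $\xi$ strictly between $0$ and $u$. The crucial observation is that $h^{(K)}(\xi)=E(X^{K+1}e^{\xi X})>0$ and $z>0$, so the sign of $G'_X(z)-\widehat G_X^{'K}(z)$ coincides with that of $u^K=(\ln z)^K$. For $z\in(0,1)$ we have $u<0$, hence $u^K>0$ when $K$ is even and $u^K<0$ when $K$ is odd, which is exactly \eqref{G'X_odd_even_compare}; at the boundary $z=1$ all higher-order terms vanish and both sides reduce to $E(X)$, so the proposition is really a statement about $z\in(0,1)$.

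The main obstacle will be the analytic housekeeping: justifying the interchange of differentiation and expectation so that $h$ is $K$-times continuously differentiable on a neighborhood of the interval joining $0$ and $u$, and verifying that Lagrange's form of the remainder is applicable to a function defined through an expectation. Both reduce to dominated-convergence arguments that rely on the standing hypothesis that every moment $E(X^n)$ is finite; once this is in place, the alternating sign of $(\ln z)^K$ on $(0,1)$ makes the odd/even dichotomy automatic.
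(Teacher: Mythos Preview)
Your argument is correct. The paper states Proposition~\ref{G'X_odd_even} without proof, so there is no authors' argument to compare against; your Taylor--Lagrange approach fills that gap cleanly. Rewriting $zG'_X(z)=h(u)$ with $u=\ln z$ and recognizing $z\widehat G_X^{'K}(z)$ as the degree-$(K-1)$ Taylor polynomial of $h$ at $0$ is exactly the right move, and the strict positivity $h^{(K)}(\xi)=E(X^{K+1}e^{\xi X})>0$ indeed forces the sign of the remainder to be that of $(\ln z)^K$.

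Two small points. First, your remark that at $z=1$ both sides collapse to $E(X)$ is well taken: the strict inequalities in \eqref{G'X_odd_even_compare} hold only on $(0,1)$, so the proposition as written is slightly overstated at the right endpoint. Second, the phrase ``$K$-times continuously differentiable on a neighborhood of the interval'' is a little optimistic: finiteness of moments does not guarantee that $h$ extends smoothly to the right of $0$. What you actually need, and what your dominated-convergence bound $X^{m+1}e^{uX}\le X^{m+1}$ for $u\le 0$ delivers, is that $h^{(m)}$ exists and is continuous on $(-\infty,0]$ with $h^{(m)}(0)$ interpreted as a left derivative. The closed-interval form of Taylor's theorem (requiring $h^{(K-1)}$ continuous on $[u,0]$ and $h^{(K)}$ existing on $(u,0)$) then applies directly, so the conclusion stands.
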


Tighter bounds can be obtained by minimizing the upper bound and maximizing the lower bound in \eqref{G'X_odd_even_compare} over the order of moments (equivalently over $i$). Additionally, the bounds in \eqref{initial_bound2} apply to distributions that have PGF. Consequently, for $z=\lambda$, we have
\begin{subequations}\label{G_X_AoI_bound}
\begin{align}
G'_X(\lambda)&\leq \underset{i}{\min} \left\{\widehat {G}_{X}^{'2i+1}(\lambda),E(S)\right\} \doteq  {\overset{\frown}{G^{'2i+1}_{X}}}(\lambda),\\
G'_X(\lambda)&\geq\underset{i}{\max}\left\{\widehat G_X^{'2i}(\lambda),\tfrac{\alpha}{\lambda}\right\} \doteq {\underset{\smile}{G^{'2i}_{X}}}(\lambda).
\end{align}
\end{subequations}

By combining \eqref{AoI2} and \eqref{G_X_AoI_bound}, we show a pair of tighter and more inclusive bounds on the average AoI of the system in the following corollary.
\begin{corollary}\label{lem_est_age_1}
Based on the higher $2i+1^{\text{st}}(i>0)$ order moments, the average AoI of the discrete transmission model is upper and lower bounded, respectively, by
\begin{subequations}\label{rt:aoi_dN_bound}
\begin{align}
	\bar\Delta_{\text{d}}&<E(S)+\frac{E(X^2)}{2E(X)}+\frac{\lambda {\overset{\frown}{G^{'2i+1}_{X}}}(\lambda) }{E(X)(1-\lambda)}\doteq{\overset{\frown}{\Delta}{^{2i+1}_d}},\\
	\bar\Delta_{\text{d}}&>E(S)+\frac{E(X^2)}{2E(X)}+\frac{\lambda {\underset{\smile}{G^{'2i}_{X}}}(\lambda) }{E(X)(1-\lambda)}\doteq{\underset{\smile}{\Delta}{^{2i}_d}},
\end{align}
\end{subequations}
where ${\overset{\frown}{G^{'2i+1}_{X}}}(\lambda)$ and ${\underset{\smile}{G^{'2i}_{X}}}(\lambda)$ are given in \eqref{G_X_AoI_bound}.
\end{corollary}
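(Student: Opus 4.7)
The plan is to derive this corollary as a direct substitution argument, using Theorem~\ref{gi/geom/1 AoI} as the exact identity and equation~\eqref{G_X_AoI_bound} as the tight two-sided envelope on $G'_X(\lambda)$. Concretely, I would start by recalling from Theorem~\ref{gi/geom/1 AoI} the closed-form identity
\begin{equation*}
\bar\Delta_{\text{d}} = E(S) + \frac{E(X^2)}{2E(X)} + \frac{\lambda\, G'_X(\lambda)}{E(X)(1-\lambda)},
\end{equation*}
and then observe that since the system is stable ($\rho<1$), we have $\lambda\in(0,1)$ and $E(X)>0$, so the multiplicative factor $\lambda/\bigl(E(X)(1-\lambda)\bigr)$ is strictly positive. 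Consequently, any inequality on $G'_X(\lambda)$ transfers directly to an inequality on $\bar\Delta_{\text{d}}$ with the same direction.

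Next, I would assemble the sharpened bounds on $G'_X(\lambda)$. Lemma~2 furnishes the initial envelope $\alpha/\lambda \le G'_X(\lambda)\le E(S)$ (these correspond to $i=0$ on the chord/secant sides of the concave PGF). Proposition~\ref{G'X_odd_even} further asserts that for every $i\ge 0$ the odd partial sums $\widehat G^{'2i+1}_X(\lambda)$ overshoot $G'_X(\lambda)$ while the even partial sums $\widehat G^{'2i}_X(\lambda)$ undershoot it. Taking the minimum of the former with $E(S)$ yields ${\overset{\frown}{G^{'2i+1}_X}}(\lambda)$, still an upper bound on $G'_X(\lambda)$; taking the maximum of the latter with $\alpha/\lambda$ yields ${\underset{\smile}{G^{'2i}_X}}(\lambda)$, still a lower bound. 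This is precisely the content of~\eqref{G_X_AoI_bound}.

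Finally, I would substitute these two bounds into the exact identity from Theorem~\ref{gi/geom/1 AoI}. Because the multiplier is positive, the upper bound on $G'_X(\lambda)$ produces the upper bound ${\overset{\frown}{\Delta}{^{2i+1}_d}}$ on $\bar\Delta_{\text{d}}$, and the lower bound on $G'_X(\lambda)$ produces ${\underset{\smile}{\Delta}{^{2i}_d}}$. The strict inequalities claimed in~\eqref{rt:aoi_dN_bound} follow provided the underlying partial-sum inequalities of Proposition~\ref{G'X_odd_even} are strict, which is the case whenever $z\in(0,1)$ and the PGF is not degenerate (so that the Taylor remainder is nonzero).

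Because the real mathematical content has already been offloaded to Theorem~\ref{gi/geom/1 AoI}, Lemma~2, and Proposition~\ref{G'X_odd_even}, no obstacle of substance remains at this step. The only mildly delicate point is bookkeeping: verifying that $\lambda\in(0,1)$ under stability so that sign-preservation holds, and that the $\min$/$\max$ with $E(S)$ and $\alpha/\lambda$ do not disrupt the inequality direction of the partial-sum bounds when one of them happens to be the tighter estimate. Both are routine once Proposition~\ref{G'X_odd_even} is in hand.
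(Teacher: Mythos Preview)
Your proposal is correct and mirrors the paper's own treatment: the corollary is obtained simply by combining the exact identity \eqref{AoI2} with the two-sided envelope \eqref{G_X_AoI_bound}, using that the factor $\lambda/(E(X)(1-\lambda))$ is positive. The paper does not even supply a formal proof here, so your write-up is slightly more detailed than what the authors record.
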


\section{NUMERICAL RESULTS}\label{section 5}
\begin{figure}[!htp]
\centering
\subfigure[{the absolute error of the estimated average AoI bounds versus traffic intensity $\rho$, in which arrival rate $p=0.125$}]
{
\begin{minipage}{0.45\textwidth}\label{fig_GI_GEOM_a}
	\centering
	\includegraphics[width=3.4in]{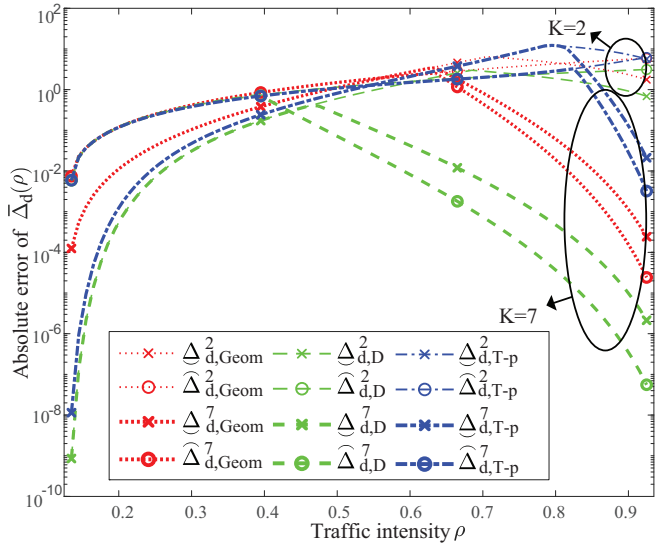}
\end{minipage}}
\subfigure[{Average AoI versus traffic intensity $\rho$, in which arrival rate $p=0.125$}]
{\begin{minipage}{0.45\textwidth}\label{fig_GI_GEOM_b}
	\centering
	\includegraphics[width=3.4in]{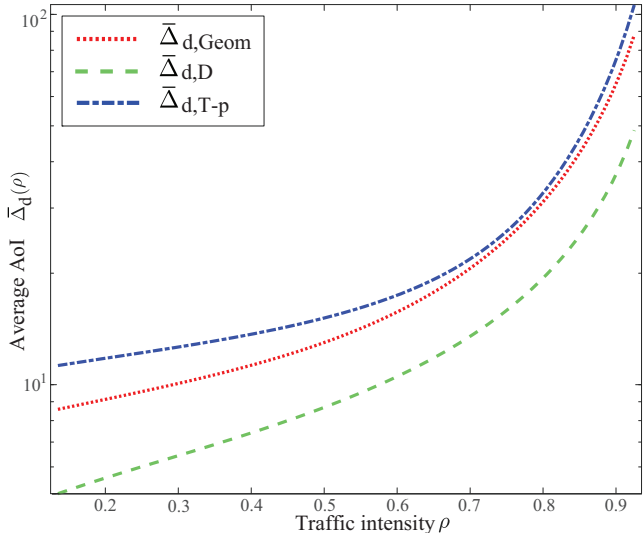}
\end{minipage}}
\caption{{Numerical results of system}}
\label{fig_GI_GEOM}
\end{figure}

In this section, we investigate the accuracy of the average AoI bounds under various sets of system parameters, such as the distribution of inter-arrival time, the maximum order of moments used for the estimation, and the traffic intensity $\rho$, in which
\begin{itemize}
\item the traffic intensity $\rho=p/\mu$ determines the estimated item $G_{\text{X}}'(\lambda) $ (cf. \eqref{AoI2});
\item  the estimation errors for distributions with larger high order moments are also larger for a given common expectation;
\item as shown in \eqref{G'_X(z)}, the estimation error goes to zero when an infinite number of moments are utilized, and it would most probably be reduced when more orders of moments are used.
\end{itemize}
Given a common expectation, we evaluate the estimation errors of the bounds for the degenerate distribution, the two-point distribution, and the geometric distribution.

Fig. \ref{fig_GI_GEOM} shows how the average AoI $\bar \Delta$ and its corresponding estimation absolute error change with traffic intensity $\rho$. We set the arrival rate as $p=1/E(X)=0.125$ and change the service rate $\mu$ so that the traffic intensity takes values between zero and unity.
In Fig. \ref{fig_GI_GEOM_a}, we show the absolute error of the lower and upper bounds for the average AoI by the curves marked by $\times$ and $\circ$, respectively.
First, we observe that the absolute errors are first increasing and then decreasing as $\rho$ increases. When $\rho$ approaches unity, the average AoI is large, and when $\rho$ is small, the absolute error is small.
Second, it is observed from Fig. \ref{fig_GI_GEOM_b} that the average AoIs monotonically increase in $\rho$. When $\rho$ approaches unity, the average AoI is large, and when $\rho$ is small, the absolute error is small. So we observe that the errors of the bounds are relatively small in these two cases.
Third, for the lower and the upper bounds, the absolute estimation error of the two-point distribution is the largest while the absolute estimation error of the degenerate distribution is the smallest.
Fourth, the thin and the thick curves characterize the absolute errors obtained with the first two (i.e., $K=2$) order moments and with up to the seventh (i.e., $K=7$) order moments, respectively.
When more moments are used, the absolute estimation error is substantially decreased as $\rho$ approaches unity.
Moreover, when $\rho$ is small, the boundaries overlap no matter whether we use the first two order moments or up to the seventh order moments.
Among the three tested distributions, the separations of curves of the two-point distribution require the largest $\rho$.
Due to the small $\lambda$ being given in \eqref{PGF_T2} introduced by the more minor $\rho$, the bounds using the first two order moments and bounds using the first seven order moments both perform worse than the initial set which replace them, so the first half of these bounds overlap.
\section{CONCLUSION}\label{section 6}
\noindent In this paper, we investigated the problem of estimating the average AoI in the point-to-point wireless networks. Specifically, we consider a typical transmission model and obtain its average AoI. The unknown term in the expression is PGF of the arrival process. We explicitly present upper and lower bounds on the average AoI through using finite order moments of the arrival processes. By using only the first two order moments of inter-arrival time, we can derive a pair of tight bounds. Using more higher-order moments can improve the performance of the calculation. Through numerical simulations, we investigated the effect of the traffic intensity, the distribution of the inter-arrival time and magnitude of each order moment on the performance of our estimation method.

\end{document}